\documentclass{article}
\usepackage{graphicx} % 
\usepackage{multicol}
\usepackage{xcolor}
\usepackage{verbatim}
\usepackage{amsmath,amssymb,amsthm,graphicx}
\usepackage{arxiv}
\usepackage{hyperref}
\usepackage{amsthm}
\usepackage{multirow}
\usepackage{hyperref}

\newtheoremstyle{thm}% name
{9pt}%      Space above, empty = `usual value'
{9pt}%      Space below
{\itshape}% Body font
{}%         Indent amount (empty = no indent, \parindent = para indent)
{\bfseries}% Thm head font
{.}%        Punctuation after thm head
{ }% Space after thm head: \newline = linebreak
{}%         Thm head spec
\theoremstyle{thm}
\newcommand{\sgn}{\operatorname{sgn}}
\newtheorem{theorem}{Theorem}[section]
\newtheorem{lemma}[theorem]{Lemma}

\usepackage{algorithm}
\usepackage[noend]{algpseudocode}
\makeatletter
\usepackage{arydshln}

\title{Testing independence in the presence of missing data: high-dimensional case
}

\author{ \href{https://orcid.org/0000-0001-5071-8350}{\includegraphics[scale=0.06]{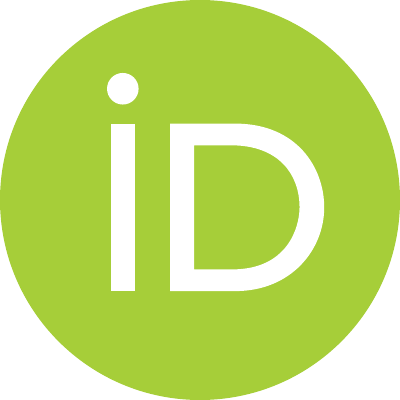}\hspace{1mm}Marija Cupari\' c} \\
	University of Belgrade\\
        Faculty of Mathematics\\
	\texttt{marija.cuparic@matf.bg.ac.rs}
 \And	 
 \href{https://orcid.org/0000-0001-8243-9794}{\includegraphics[scale=0.06]{orcid.pdf}\hspace{1mm}Bojana Milo\v sevi\' c} \\
	University of Belgrade\\
        Faculty of Mathematics\\
	\texttt{bojana@matf.bg.ac.rs}
 \And	 
      \href{https://orcid.org/0009-0001-7275-6654}{\includegraphics[scale=0.06]{orcid.pdf}\hspace{1mm}Jelena Radojevi\' c} \\
	University of Belgrade\\ 
        Faculty of Civil Engineering\\ 
        Faculty of Mathematics \\
\texttt{jradojevic@grf.bg.ac.rs}
}

\begin{document}

\maketitle 
\begin{abstract} 
In this paper, we consider the problem of testing independence in high-dimensional settings with missing data. Building upon a recently proposed Kendall-based statistic, we introduce two new modifications specifically designed to accommodate incomplete observations. The proposed methods are studied from both theoretical and empirical perspectives. A comprehensive simulation study illustrates the robustness and applicability of the new approaches. The findings contribute to the development of nonparametric methods for analyzing high-dimensional and incomplete data structures.

\keywords{high-dimensional data, missing values, independence testing, Kendall coefficient, M(C)AR}
\end{abstract}

\section{Introduction}
\sloppy

In many modern scientific fields, such as genomics, biomedical research, finance, environmental studies, and machine learning, a typical property of underlying datasets is that the number of variables $d$ is often comparable to, or even much larger than, the sample size $n$. This high-dimensional setting requires statistical procedures whose theoretical properties remain valid when $(d,n) \to\infty$. Over the past decade, numerous model specification tests in high dimensions have been developed to address these challenges (see e.g. \cite{xu2016adaptive}, \cite{zhang2021simple}, \cite{zhang2024projective}, \cite{sang2024test}, \cite{marozzi2020interpoint} and reference therein).

A second and equally common challenge in modern data analysis is the presence of missing data. For readers interested in practical approaches to handling missing data, we recommend the monograph by \cite{enders2022applied}. Missingness may arise due to human error, measurement limitations, technical problems, or cost constraints, and it can significantly impact the accuracy and interpretability of statistical procedures. For example, in the context of survey data, a comprehensive overview of this issue is provided by \cite{mirzaei2022missing}. 

Let  $\mathbf{X}$ be a $d$-dimensional random vector, and 
$\mathbf{R} = (R^{(1)}, \dots, R^{(d)})$ the corresponding \emph{response indicator vector}, where $R_i = 1$ if $X_i$ is observed and $0$ otherwise. The conditional distribution of $R$ given $X$ defines the \emph{missingness mechanism}. Following \cite{little2019statistical}, data are classified as:
\begin{itemize}
    \item \textbf{Missing Completely at Random (MCAR)} if 
    $$
    P(\mathbf{R} = \mathbf{r} \mid \mathbf{X}_{\mathrm{obs}}, \mathbf{X}_{\mathrm{mis}}) = P(\mathbf{R} = \mathbf{r}),
    $$
    \item \textbf{Missing at Random (MAR)} if
    $$
    P(\mathbf{R} = \mathbf{r} \mid \mathbf{X}_{\mathrm{obs}}, \mathbf{X}_{\mathrm{mis}}) = P(\mathbf{R} = \mathbf{r}  \mid \mathbf{X}_{\mathrm{obs}}),
    $$
    \item \textbf{Missing Not at Random (MNAR)} otherwise.
\end{itemize}
Depending on the type of missingness, different approaches for handling this problem are available in the literature. In practice, complete-case analysis - removing observations with any missing entries - is one common approach. Although, under MCAR, this approach usually yields unbiased estimates and possesses many desirable statistical properties, it results in having no observations when $d$ is large.

Another widely used approach is imputation. Although numerous imputation methods have been proposed (see, e.g., \cite{Buuren2012, stekhoven2012missforest, golchian2025missing}), simple techniques such as mean or median imputation are still frequently applied in practice. However, such strategies may distort the underlying distributions and introduce bias in the estimation of variances, correlations, and regression coefficients \cite{little2019statistical}. Moreover, the missingness mechanism is often unknown \cite{xie2017dissecting}, which makes naive imputation and complete-case analysis unreliable in many situations.

For testing the MCAR assumption, we refer to the classical Little’s test \cite{little1988test}, as well as to more recent developments \cite{aleksic2024novel, bordino2025tests}.

Testing total independence among variables becomes challenging because many dependence measures rely on fully observed pairs of observations. Traditional high-dimensional independence tests, including rank-based statistics and kernel methods, typically assume complete data (see \cite{mao2018testing} and references therein).

To formalize the problem, we consider the following statistical framework for testing total independence among multiple variables.

Let $\boldsymbol{X} = (X_1, \dots, X_d)$ be a jointly and marginally continuous random vector with joint density $f(x_1, \dots, x_d)$ and marginal densities $f_k(x_k)$ for $k=1,\dots,d$. Based on this setup, the null hypothesis of total independence among the components of $\boldsymbol{X}$ is
    \begin{align*}
        H_0: f(x_1, \dots, x_d) = \prod_{k=1}^{d} f_k(x_k),
    \end{align*}
    against the alternative
    \begin{align*}
        H_1: H_0 \text{ does not hold.}
    \end{align*}
    
In the Gaussian case, this is equivalent to pairwise independence, which leads to test statistics based on the sum of squares of all pairwise Pearson correlation coefficients (see \cite{schott2005testing}). In general, this equivalence does not hold. However, statistics of a similar form have been proposed in which Pearson’s correlation coefficient is replaced by other measures of dependence. Moreover, when the Gaussian assumption does not hold other measures are preferable. 
Let us assume that we have a independent and identically distributed (i.i.d.) sample $\mathbf{X}_1,\dots, \mathbf{X}_d$ as $\mathbf{X}$. One of the most popular choices are rank-based dependence measures including the Kendall's $\tau$. In particular, Kendall's $\tau$ for $X_k$ and $X_l$ is defined as
\begin{equation*}
    \tau_{kl} = \frac{2}{n(n-1)}\sum_{i=2}^n \sum_{j=1}^{i-1} \sgn(X_{ik} - X_{jk})\sgn(X_{il}-X_{jl}).
\end{equation*}
In \cite{Leung2018} (see also \cite{mao2018testing}), the corresponding test statistic is proposed
\[
T_{nd} = \sum_{k=2}^{d} \sum_{l=1}^{k-1} \tau_{kl}^2.
\]
The author showed that under $H_0$ $$S_\tau^* = \frac{1}{\sigma_{nd}} \left(T_{nd} - \frac{d(d-1)(2n+5)}{9n(n-1)} \right),$$ where $$\sigma_{np}^2 = \operatorname{Var}(T_{nd}) = \frac{4d(d-1)(n-2)(100n^3+492n^2+731n+279)}{2025n^3(n-1)^3},$$ converges in distribution to a standard normal random variable as both $n$ and $d$ grow.
This establishes the asymptotic validity of the test and justifies using the standard normal quantiles to assess significance in high-dimensional settings.

In this paper, we extend this framework to datasets with missing values under the MCAR assumption. We propose modifications that ensure robust and principled inference. Our approach is valid under high-dimensional asymptotics $(d,n) \to \infty$ and also in the high-dimension, low sample size (HDLSS) setting, where $d$ grows large while $n$ remains fixed. Our method is based on the sum of squares of pairwise Kendall's $\tau$ coefficients, allowing valid testing of total independence without relying on Gaussianity or fully observed data.

The paper is organized as follows. In Section \ref{sec:testStat}, we first consider a complete-case approach for testing independence based on a modified Kendall’s $\tau$ statistic in the presence of MCAR data. Since this approach is not feasible in high-dimensional settings without additional restrictions on the missingness probabilities, we then propose a novel pairwise test that exploits all available observations. For the proposed statistic, we derive the asymptotic null distribution, which is subsequently used to approximate p-values. In Section \ref{sec:empStudy}, we present the results of an extensive empirical study assessing the size and power of the proposed test under various settings. Concluding remarks are given in Section \ref{sec:conclusion}. All proofs are provided in \ref{sec:proof}.

\section{Test statistics}\label{sec:testStat}
\subsection{Complete-case approach}

In the presence of missing data, a natural strategy is to restrict the analysis to fully observed samples. Let $S_i = \prod_{k=1}^d R_{ik}$ denote the completeness indicator for the $i$-th observation, where $R_{ik} = 1$ if $X_{ik}$ is observed and $0$ otherwise. Following \cite{aleksic2023non}, the Kendall's $\tau$ statistic can then be modified as
\begin{equation}
\tau_{kl}^{cc} = \frac{2}{n(n-1)} \sum_{i=2}^n \sum_{j=1}^{i-1} 
\sgn(X_{ik} - X_{jk}) \, \sgn(X_{il} - X_{jl}) \, S_i S_j, 
\end{equation}
and the overall test statistic for $d$-dimensional data is
\begin{equation}
T_{nd}^{cc} = \sum_{k=2}^{d} \sum_{l=1}^{k-1} (\tau_{kl}^{cc})^2.
\end{equation}

Under the null hypothesis of mutual independence $H_0$, and assuming the missingness mechanism is MCAR with $ES_i = q$, the expectation of the statistic is given by
\begin{equation}
E(T_{nd}^{cc}) = \frac{2 d(d-1) q^2}{n(n-1)} \left( \frac{n-2}{9} q + \frac{1}{2} \right).
\end{equation}
The variance can be derived as
\begin{align*}
\operatorname{Var}&(T_{nd}^{cc}) \!=\! \frac{4 d(d-1) q^2}{2025 n^3 (n-1)^3} \Bigl( 2025 \!+\! 16650 (-2\!+\!n) q \!+\! 675 (111 \!-\! 91 n \!+\! 17 n^2) q^2 \\
&\!+\! 18 (-3456 \!+\! 3694 n \!-\! 1221 n^2 \!+\! 119 n^3) q^3 \!+\! 50 (360 \!-\! 458 n \!+\! 205 n^2 \\ &\!-\! 37 n^3 \!+\! 2 n^4) q^4 \Bigr)
\!-\! \frac{2(d+1)d(d-1)(d-2) q^2 (q-1)}{27 n^3 (n-1)^3}\\ &\cdot\! \Bigl( 27 \!+\! (\!-\!161\!+\!94 n) q \!+\! 8 (26-25 n+6 n^2) q^2 \!+\! 2(-40 \!+\! 50 n \!-\! 21 n^2 \!+\! 3 n^3) q^3 \Bigr).
\end{align*}

These results motivate an empirical investigation of the finite-sample behavior of the standardized test statistic under the MCAR mechanism. Our simulations indicate that, for fixed $q$, the standardized statistic is well approximated by a normal distribution under the null hypothesis.
We do not investigate the limiting properties of the statistic due to the aforementioned limitations of the complete-case approach, which prevent a meaningful assessment in high-dimensional settings. Despite these favorable theoretical and empirical properties, the complete-case approach becomes infeasible as the dimension $d$ increases with fixed column-wise missingness, since the number of fully observed observations rapidly decreases. Although feasibility can be preserved when the total proportion of missingness is fixed, such an assumption is generally unrealistic in high-dimensional scenarios, thereby limiting the practical applicability of the complete-case methodology.

\subsection{Pairwise approach}

Motivated by the limitations of the complete-case approach in high-dimen\-sional settings, we propose an alternative strategy that avoids discarding partially observed samples. The key idea is to replace the global completeness indicators $S_i$ and $S_j$ with appropriate cell-wise missingness indicators, thereby exploiting all available pairwise information.

Specifically, we construct a modified Kendall’s tau statistic in which each pairwise comparison is included only if the corresponding observations are available for both variables, and each sign product is weighted by the respective cell-wise missingness indicators $R_{ik}, R_{jk}, R_{il}, R_{jl}$.

Formally, the modified Kendall’s tau coefficient is defined as
\begin{equation}
\widetilde{\tau}_{kl} 
= \frac{2}{n(n-1)} \sum_{i=2}^n \sum_{j=1}^{i-1} 
\sgn(X_{ik}-X_{jk}) \, \sgn(X_{il}-X_{jl}) \,
R_{ik} R_{jk} R_{il} R_{jl},
\end{equation}
where $R_{ik} = 1$ if $X_{ik}$ is observed and $0$ otherwise, with $E(R_{ik}) = q_k$.
The test statistic is then given by
\begin{equation}
\widetilde{T}_{nd} = \sum_{k=2}^{d} \sum_{l=1}^{k-1} \widetilde{\tau}_{kl}^2.
\end{equation}

To facilitate the derivation of asymptotic properties, we also consider a $U$-statistic version of the test, denoted by $\widehat{T}_{nd}$, in which $\widetilde{\tau}_{kl}^2$ is replaced by
\begin{equation}
U_{kl} \!=\! \frac{1}{n(n-1)(n-2)(n-3)} \sum_{1 \le i_1 \neq i_2 \neq i_3 \neq i_4 \le n} f_k(i_1,i_2) f_l(i_1,i_2) f_k(i_3,i_4) f_l(i_3,i_4),
\end{equation}
where $f_k(i,j) = \sgn(X_{ik}-X_{jk}) R_{ik} R_{jk}$. 

The $U$-statistic version $\widehat{T}_{nd}$ retains all available pairwise comparisons and offers a practical framework for deriving variance and asymptotic normality results, while the simpler statistic $\widetilde{T}_{nd}$ can be analyzed based on its relationship with $\widehat{T}_{nd}$.

\subsection{Asymptotic properties}

We study the asymptotic behavior of the proposed $U$-statistic $\widehat{T}_{nd}$ to provide a theoretical justification for its use in high-dimensional settings with missing data. Under the null hypothesis of mutual independence and assuming an MCAR mechanism and mutual independence between missingness indicators, with fixed column-wise missingness probabilities $q_1,\dots,q_d$, the standardized statistic is asymptotically normal.

For the cell-wise statistic $\widetilde{T}_{nd}$, we have explicitly derived both the expected value and the variance. The expected value can be written as
\begin{align*}
E (\widetilde{T}_{nd}) &= \sum_{l=1}^{k-1} E\,\widetilde{\tau}_{kl}^2
= \frac{4}{n(n-1)}\sum_{k=2}^p \sum_{l=1}^{k-1} q_k^2q_l^2 \left(\frac{n-2}{9}q_kq_l + \frac{1}{2} \right),
\end{align*}
which provides a closed-form expression for the mean of the test statistic under the MCAR mechanism.
The variance of $\widetilde{T}_{nd}$ is more complex due to dependencies between overlapping pairs of indices. It can be decomposed into contributions from the individual terms and the covariances between them:
\begin{align*}
\operatorname{Var}&(\widetilde{T}_{np}) = \operatorname{Var}\left( \sum_{k=2}^p \sum_{l=1}^{k-1} \widetilde{\tau}_{kl}^2 \right) = \sum_{k_1=2}^p \sum_{l_1=1}^{k_1-1}\sum_{k_2=2}^p \sum_{l_2=1}^{k_2-1} cov(\widetilde{\tau}_{k_1l_1}^2, \widetilde{\tau}_{k_2l_2}^2) \\ &= \sum_{k=2}^p \sum_{l=1}^{k-1} \operatorname{Var}(\widetilde{\tau}_{kl}^2) + \sum_{k=3}^p \sum_{l_1=1}^{k-1}\sum_{l_2=1}^{k-1} cov(\widetilde{\tau}_{kl_1}^2, \widetilde{\tau}_{kl_2}^2) + \sum_{k_1=2}^p \sum_{l=1}^{k_1-1}\sum_{k_2=l+1}^p cov(\widetilde{\tau}_{k_1l}^2, \widetilde{\tau}_{k_2l}^2) \\ & + \sum_{k=2}^{p-1} \sum_{l_1=1}^{k-1}\sum_{k_2=k+1}^p cov(\widetilde{\tau}_{kl_1}^2, \widetilde{\tau}_{k_2k}^2) + \sum_{k_1=3}^p \sum_{l=2}^{k_1-1}\sum_{l_2=1}^{l-1} cov(\widetilde{\tau}_{k_1l}^2, \widetilde{\tau}_{ll_2}^2) .
\end{align*}
The variance of an individual squared term is given by
\begin{align*}
\operatorname{Var}(&\widetilde{\tau}_{kl}^2) \!=\! \frac{8 q_k^2 q_l^2}{2025 n^3(n - 1)^3 } \Big[2025 \!+\! 16650 (n \!-\! 2) q_k q_l \!+\! 675 (111 \!-\! 91n \!+\! 17n^2) q_k^2 q_l^2 \\ &\!+\! 18 (\!-\!3456 \!+\! 3694n \!-\! 1221n^2 \!+\! 119n^3) q_k^3 q_l^3 \\&+\! 50 (360 \!-\! 458n \!+\! 205n^2 \!-\! 37n^3 \!+\! 2n^4) q_k^4 q_l^4 \Big].
\end{align*}
Different covariance terms appear only when two squared statistics share at least one common index. These covariances have the same analytical structure and differ only by the specific indices involved. Thus, we give the explicit form for one case, noting that all others are obtained analogously. When two terms share the index $k$ the covariance between $\widetilde{\tau}_{kl_1}^2$ and $\widetilde{\tau}_{kl_2}^2$ is given by
\begin{align*}
\mathrm{cov}&(\widetilde{\tau}_{kl_1}^2, \widetilde{\tau}_{kl_2}^2) =\! \frac{8(1-q_k) q_k^2 q_{l_1}^2 q_{l_2}^2 }{27n^3(n-1)^3}\Big[27 \!+\! 9 q_k \left(-9 \!+\! 6n \!+\! 2(n \!-\! 2)(1 \!+\! (n \!-\! 2) q_k) q_{l_1}\right) \\ &\!+\! 2(n \!-\! 2) q_k \Big(9 \!+\! 9(n \!-\! 2) q_k \!+\! \left(2 \!+\! 4 q_k(\!-\!4 \!+\! 5 q_k) \!+\! 3n q_k(2 \!+\! (\!-\!5 \!+\! n) q_k)\right) q_{l_1} \Big) q_{l_2}\Big].
\end{align*}

Applying this reasoning, we obtain the variance of $\widehat{T}_{nd}$: 
\begin{align*}
\operatorname{Var}(\widehat{T}_{nd}) 
&= \sum_{1 \le k \neq l \le d}  \Bigl[
\frac{68 \, n (n-1)(n-2)(n-3)}{9} \, q_k^4 q_l^4 \\
&+ \frac{224 \, n (n-1)(n-2)(n-3)(n-4)}{45} \, q_k^5 q_l^5\\& + \frac{32 \, n (n-1)(n-2)(n-3)(n-4)(n-5)}{81} \, q_k^6 q_l^6
\Bigr].
\end{align*}

The following theorem formalizes the asymptotic normality of $\widehat{T}_{nd}$ under the null hypothesis and establishes the validity of the proposed test in high-dimensional settings.
 
\begin{theorem}\label{thm} 
Let us assume that we have an i.i.d. sample $\mathbf{X}_1,\dots, \mathbf{X}_n$ as $\mathbf{X}$, and let $\mathbf{R}_1,\dots, \mathbf{R}_n$ be the corresponding vectors of response indicators with independent components. We also assume the MCAR property. Then:
    \begin{itemize}
        \item[(i)] Under $H_0$, $\frac{\hat{T}_{n,d}}{\sqrt{\operatorname{Var}(\hat{T}_{n,d})}} \to N(0,1)$ as $(d,n)\to \infty$;
        \item[(ii)] Under $H_0$, $\frac{\hat{T}_{n,d}}{\sqrt{\operatorname{Var}(\hat{T}_{n,d})}} \to N(0,1)$ as $d\to \infty$ for fixed $n$.
    \end{itemize}
\end{theorem}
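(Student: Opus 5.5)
Under $H_0$ together with MCAR and mutually independent response indicators, the quantities $f_k(i,j)=\sgn(X_{ik}-X_{jk})R_{ik}R_{jk}$ are built from the column-$k$ data $(X_{1k},R_{1k}),\dots,(X_{nk},R_{nk})$ only. These $d$ columns are mutually independent, and $E[f_k(i,j)\mid \text{column } k\text{ except } j]=0$ by the symmetry of $\sgn(X_{ik}-X_{jk})$ for continuous i.i.d. entries. Consequently $E U_{kl}=0$, since the $U$-statistic kernel involves four distinct indices. Moreover $U_{kl}$ is a degenerate bilinear form between independent random objects $\mathcal{C}_k,\mathcal{C}_l$ (the column-$k$ and column-$l$ data). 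I would therefore write
\[
\widehat T_{nd}=\sum_{k=2}^d\sum_{l<k} U_{kl}=\sum_{k=2}^d D_k,\qquad D_k=\sum_{l<k}U_{kl},
\]
and observe that $D_k$ is a martingale difference sequence with respect to $\mathcal F_k=\sigma(\mathcal C_1,\dots,\mathcal C_k)$. Indeed, $E[U_{kl}\mid \mathcal F_{k-1}]$ integrates out column $k$, giving $\sum f_l f_l\, E[f_k(i_1,i_2)f_k(i_3,i_4)]=0$ for distinct $i_1,i_2,i_3,i_4$. The proof then reduces to the martingale CLT (Hall–Heyde): with $s^2=\operatorname{Var}(\widehat T_{nd})=\sum_k E D_k^2$, it suffices to show (a) $\sum_k E[D_k^2\mid\mathcal F_{k-1}]/s^2\to 1$ in probability and (b) a Lyapunov condition $\sum_k E D_k^4/s^4\to 0$. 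Both regimes (i) and (ii) are handled simultaneously by keeping track of explicit orders in $n$ and $d$.

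For (a), I would introduce the $n(n-1)\times n(n-1)$ kernel matrices $A_k(i_1i_2,i_3i_4)=E$-free objects $f_k(i_1,i_2)f_k(i_3,i_4)$ restricted to distinct indices. Then
\[
E[D_k^2\mid\mathcal F_{k-1}]=\sum_{l,l'<k} g(\mathcal C_l,\mathcal C_{l'}),
\]
where $g$ averages the column-$k$ factor out. The target is $\operatorname{Var}\bigl(\sum_k E[D_k^2\mid\mathcal F_{k-1}]\bigr)=o(s^4)$. The diagonal terms $l=l'$ contribute a variance of order $d^3$ times a per-column constant, while the off-diagonal terms $l\ne l'$ are again degenerate in each argument and contribute order $d^4\,E g(\mathcal C_1,\mathcal C_2)^2$. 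Since $s^2\asymp d^2 c_n$, with $c_n$ read off from the displayed variance formula (using that the $q_k$ are bounded away from $0$, which I would assume, so that $c_n\asymp n^{-2}$ after normalisation), I need $E g^2 = o(c_n^2)$ uniformly. For fixed $n$, the factor $d^3/d^4\to0$ already settles the diagonal, and the off-diagonal piece requires showing that $Eg(\mathcal C_1,\mathcal C_2)^2$ is strictly smaller in order than $(Eg(\mathcal C_1,\mathcal C_1))^2$. This is a Hoeffding-type computation: $g$ is a trace of a product of the three conditional kernel matrices, and the bound follows from $\operatorname{tr}(ABAB)\le$ powers of $n$ counted via index coincidences.

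For (b), I would bound $E D_k^4$ by expanding into sums over $l_1,\dots,l_4<k$. Independence and degeneracy kill every term in which some $l_j$ appears exactly once, leaving $O(k^2)$ paired terms and $O(k)$ terms with all four $l_j$ equal. Each term is bounded by moment bounds on $U_{kl}$ (hypercontractivity or direct index counting for degenerate $U$-statistics with bounded kernels gives $E U_{kl}^4 \lesssim (E U_{kl}^2)^2$ up to constants uniform in $n$). This yields $\sum_k E D_k^4\lesssim d^3 c_n^2$, hence a ratio of order $1/d\to0$ in both regimes.

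\textbf{Main obstacle.} The main obstacle is step (a) for $(d,n)\to\infty$ jointly. The index-coincidence bookkeeping for $E g(\mathcal C_1,\mathcal C_2)^2$ with the missingness weights $R_{ik}R_{jk}$ (which destroy the exact rank structure used by Leung and Drton) must be shown to be of smaller order in $n$, uniformly. I would first attempt this by Hoeffding-decomposing $f_k$ into the projection $h_k(i)=E[f_k(i,j)\mid X_{ik},R_{ik}]=R_{ik}q_k(2F_k(X_{ik})-1)$ plus a degenerate remainder. Only the projection parts drive the leading $n$-order terms, and these reduce the problem to sums of products of i.i.d. mean-zero bounded variables, where the required fourth-moment and cross bounds are routine.
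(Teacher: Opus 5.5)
\textbf{Same route, but the fixed-$n$ claim fails.} Your architecture coincides with the paper's: the same column filtration, the same martingale differences $D_k=\sum_{l<k}U_{kl}$, a Lyapunov bound from $\sum_k ED_k^4$, and a quadratic-variation condition. Your step (b) is fine and gives order $1/d$ in both regimes. The paper uses McLeish's theorem with $\sum_k Y_{n,k}^2\to 1$ in probability rather than conditional variances, which is immaterial.

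The genuine gap is your claim that step (a) also covers fixed $n$. Let $W_k\in\mathbb{R}^{n(n-1)(n-2)(n-3)}$ have coordinates $f_k(i_1,i_2)f_k(i_3,i_4)/\sqrt{n(n-1)(n-2)(n-3)}$, indexed by distinct quadruples $\iota=(i_1,i_2,i_3,i_4)$. Then $U_{kl}=\langle W_k,W_l\rangle$ exactly, $EW_k=0$ and $\|W_k\|^2\le 1$. Under $H_0$ the $W_k$ are independent; when, say, $q_k\equiv q$ they are i.i.d. with covariance $\Sigma_n$. Your $g$ is then $g(\mathcal C_l,\mathcal C_{l'})=W_l^\top\Sigma_n W_{l'}$, so $Eg(\mathcal C_1,\mathcal C_2)^2=\operatorname{tr}(\Sigma_n^4)$ and $Eg(\mathcal C_1,\mathcal C_1)=\operatorname{tr}(\Sigma_n^2)=\operatorname{Var}(U_{kl})$. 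For fixed $n$, the off-diagonal share of $\operatorname{Var}(\sum_k E[D_k^2\mid\mathcal F_{k-1}])/s^4$ converges to $\tfrac43\operatorname{tr}(\Sigma_n^4)/(\operatorname{tr}\Sigma_n^2)^2>0$. No index-coincidence counting can make a $d$-free positive constant vanish.

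This is not a defect of your method: part (ii) is false. Take $\widehat T_{nd}=\sum_{k\neq l}U_{kl}$ as in the appendix. Then $\widehat T_{nd}=\|\sum_k W_k\|^2-\sum_k\|W_k\|^2\ge -d$, whereas $\operatorname{Var}(\widehat T_{nd})=2d(d-1)\operatorname{tr}(\Sigma_n^2)$. So the standardized statistic is bounded below by $-\sqrt{d/(d-1)}\,(2\operatorname{tr}\Sigma_n^2)^{-1/2}$ for every $d$, and it cannot converge to $N(0,1)$. Its actual limit is $\sum_j\lambda_j(Z_j^2-1)/\sqrt{2\sum_j\lambda_j^2}$, with $Z_j$ i.i.d. $N(0,1)$ and $\lambda_j$ the finitely many eigenvalues of $\Sigma_n$.

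The paper's proof reaches (ii) only because Lemma~\ref{lemma_cond}(ii) makes two incorrect steps. It asserts that $E(U_{k_1l_1}U_{k_1l_2}U_{k_2l_3}U_{k_2l_4})\neq0$ only if $l_1=l_2$ and $l_3=l_4$, and it factorizes $E(U_{k_1l_1}^2U_{k_2l_2}^2)$. This discards exactly the $O(d^4)$ terms $E(U_{k_1l_1}U_{k_1l_2}U_{k_2l_1}U_{k_2l_2})=\operatorname{tr}(\Sigma_n^4)$ that you rightly isolated.

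\textbf{Closing part (i).} Here your plan works, but its crux is only sketched. One way to finish it: since $\operatorname{tr}(\Sigma_n^4)\le\lambda_{\max}(\Sigma_n)^2\operatorname{tr}(\Sigma_n^2)$, and $\operatorname{tr}(\Sigma_n^2)\asymp n^{-2}$ when the $q_k$ are bounded away from $0$, it suffices to show $\lambda_{\max}(\Sigma_n)=o(n^{-1})$.
\begin{itemize}
\item An entry $E[W(\iota)W(\iota')]$ vanishes unless each of the four index pairs in $\iota,\iota'$ shares an index with another pair; otherwise that factor is independent of the rest and has mean zero.
\item The two pairs of $\iota'$ are disjoint, so each must meet $\iota$, which forces at least two indices of $\iota'$ into $\iota$.
\item Hence every row of $\Sigma_n$ has $O(n^2)$ nonzero entries, each at most $1/(n(n-1)(n-2)(n-3))$ in absolute value.
\item So $\lambda_{\max}(\Sigma_n)=O(n^{-2})$, and the off-diagonal ratio is $O(n^{-2})$.
\end{itemize}
With unequal $q_k$ the same estimate controls $\operatorname{tr}(\Sigma_{n,l}\Sigma_{n,k}\Sigma_{n,l'}\Sigma_{n,k'})$ via $|\operatorname{tr}(ABCD)|\le\|A\|\,\|B\|\,\|C\|_F\|D\|_F$.

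There is also a minor slip. $E[f_k(i,j)\mid X_{ik},R_{ik}]$ equals your $h_k(i)\neq0$, not $0$. What $EU_{kl}=0$ and the martingale property actually use is $E[f_k(i_1,i_2)f_k(i_3,i_4)]=0$ for distinct indices.
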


The proof of Theorem \ref{thm} is given in \ref{sec:proof}. The large values of $\left|\frac{\hat{T}_{n,d}}{\sqrt{\operatorname{Var}(\hat{T}_{n,d})}}\right|$ are considered significant.

In practice, the missingness probabilities $q_k$ are typically unknown and can be consistently estimated by
\[
\hat{q}_k = \frac{1}{n}\sum_{i=1}^n R_{ik}.
\]

The asymptotic normality of the standardized statistic $\widetilde{T}_{nd}$ can be derived under the additional constraint that the number of columns with missing observations $l$ remains fixed or grows slowly, i.e., $l = O(1)$. If missingness occurs in a larger fraction of columns, the asymptotic approximation for $\widetilde{T}_{nd}$ may no longer be valid, and the full $U$-statistic $\widehat{T}_{nd}$ should be used.

\section{Empirical study}\label{sec:empStudy}

We conduct a simulation study to investigate the finite-sample performance of the standardized version of $\widehat{T}_{nd}$, with estimated variance, in terms of empirical size and power at the nominal significance level of $0.05$.  The simulations are carried out for various combinations of the sample size $n\in\{10, 20, 40\}$ and the dimension $d\in\{4,8,16,32,64,128\}$. For each $(n,d)$ configuration, the results are based on $5000$ independent Monte Carlo replications.

We first examine the empirical size of the test under the null hypothesis of total independence. The components $X_{ik}$, $k = 1, \dots, d$ and $i = 1, \dots, n$, are generated independently from different marginal distributions exhibiting distinct tail behaviors, including the Gamma distribution $\gamma(4,2)$, the Student $t_4$ distribution and the standard Cauchy distribution.

To evaluate the impact of missing data, we consider column-wise missingness MCAR mechanisms with missingness rates equal to $0.1$ and $0.2$. We also consider the case where missingness rates differ across columns. The missingness rates are $0.1$, $0.2$, and $0.3$ for the first three columns, and $0.1$ for the remaining columns. The corresponding empirical sizes are reported in Tables~\ref{tab:size0.1}, ~\ref{tab:size0.2} and~\ref{tab: 3}, respectively. The total missingness rates induced by the column-wise mechanism are summarized in the last row of each table. Note that a complete-case analysis would be highly inefficient in this setting, as most columns would quickly contain missing values, leaving very few observations available for analysis.

\begin{table}[htbp]
    \centering
    \caption{Empirical size (\%): column-wise missingness rate {$0.1$}}
    \label{tab:size0.1}
    \begin{tabular}{cccccccc}
\hline
Marginal df.&$n \backslash d$ & 4    & 8    & 16   & 32   & 64 & 128\\
\hline
\multirow{3}{*}{$\gamma(4,2)$}&10 & 4.98 & 4.86 & 4.84 & 4.44 & 4.84 & 5.1 \\
&20 & 5.34 & 4.64 & 4.42 & 4.30 & 4.98 & 4.3 \\
&40 & 5.28 & 5.08 & 4.60 & 4.62 & 3.72 & 4.52 \\\hline
\multirow{3}{*}{$t_4$}&10 & 5.02 & 4.88 &	5.38 & 4.54 & 4.84 & 4.52 \\
&20 & 5.50 & 5.34 & 5.54 & 4.62 & 4.64 & 4.2 \\
&40 & 6.08 & 5.90 &	4.98 & 4.64 & 4.62 & 4.88\\\hline
\multirow{3}{*}{$\text{Cauchy}(0,1)$}&10 & 5.22 & 4.22 & 4.62 & 4.36 & 4.54 & 4.44 \\
&20 & 5.94 & 5.18 & 4.56 & 4.82 & 4.2 & 4.62 \\
&40 & 6.28 & 5.74 &	4.68 &  5.16 & 4.78 & 4.82\\
\hline
\multicolumn{2}{c}{total.miss.rate} & 0.35&	0.57&	0.81&	0.97&	1.00 & 1.00 \\
\hline
\end{tabular}
\end{table}

\begin{table}[hbtp]
    \centering
    \caption{Empirical size (\%): column-wise missingness rate {$0.2$}}
    \label{tab:size0.2}
    \begin{tabular}{cccccccc}
\hline
Marginal df.&$n \backslash d$ & 4    & 8    & 16   & 32   & 64  & 128 \\
\hline
\multirow{3}{*}{$\gamma(4,2)$}&10&3.30&	3.06&	3.00	&2.76&	2.94 & 3.56\\
&20 & 4.74&	3.78&	3.60	&3.50	&3.38 & 3.52 \\
&40 & 5.22&	4.46&	4.48	&3.86&	4.08 & 4.28 \\\hline
\multirow{3}{*}{$t_4$}&10 & 3.32 & 3.40 &	3.36 & 3.32 & 3.32 & 3.2\\
&20 & 5.04 & 4.18 & 4.06 & 3.60 & 3.38 & 3.74 \\ 
&40 & 5.48 & 5.50 & 4.38 & 4.12 & 4.36 & 4.08\\
\hline
\multirow{3}{*}{$\text{Cauchy}(0,1)$}&10 & 3.64 & 3.24 & 2.90 & 2.84 &	2.66 & 3.18  \\
&20 & 4.66 & 4.52 & 3.64 & 3.62 & 3.74 & 3.44 \\
&40 & 5.62 & 4.76 & 3.88 & 4.12 & 4.06 & 3.94\\
\hline
\multicolumn{2}{c}{total.miss.rate} & 0.59&	0.83&	0.97&	0.99&	1.00 & 1.00 \\
\hline
\end{tabular}
    
\end{table}

\begin{table}[hbtp]
    \centering
    \caption{Empirical size (\%): different column-wise missingness rates}
    \label{tab: 3}
    \begin{tabular}{cccccccc}
\hline
Marginal df.&$n \backslash d$ & 4    & 8    & 16   & 32   & 64 & 128\\
\hline
\multirow{3}{*}{$\gamma(4,2)$}&10 & 4.14 & 4.34	& 4.50 & 4.36 & 5.00 & 5.06 \\
&20 & 5.16 & 5.02 &	4.20 & 4.14 & 4.88 & 4.24\\
&40 & 5.30 & 5.22 & 4.74 & 4.66 & 3.74 & 4.86 \\

\hline
\multirow{3}{*}{$t_4$}&10 &3.78 & 4.60 & 5.16 & 4.56 & 4.86 & 4.62  \\
&20 &  5.52 & 5.00 & 5.04 & 4.92 & 4.54 & 4.10 \\
&40 & 6.16 & 5.80 & 4.88 & 4.76 & 4.58 & 4.94
\\

\hline
\multirow{3}{*}{$\text{Cauchy}(0,1)$}&10 &  4.22 & 4.18 & 4.82 & 4.40 & 4.62 & 4.40  \\
&20 &  5.14 & 5.08 & 4.46 & 4.66 & 4.16 & 4.46  \\
&40 & 5.88 & 5.62 & 4.72 & 4.96 & 4.48 & 4.72 \\
\hline
\multicolumn{2}{c}{total.miss.rate} &0.55 &0.70	&0.87&0.98&1.00	 &1.00  \\
\hline

\hline
\end{tabular}
\end{table}

Overall, the empirical size of the proposed test remains close to the nominal level, even for small sample sizes. Mild deviations are observed as the dimension increases and the missingness rate becomes more severe, which is expected due to the reduced amount of available information. Nevertheless, the test shows good control of the empirical size even in high-dimensional settings and under heavy-tailed marginal distributions.

We next investigate the empirical power of the test under several dependence alternatives. The first alternative corresponds to a global dependence structure, where all components are mutually dependent. Specifically, the data are generated in the following way. Let $\Sigma_d = 0.95 I_d + 0.05 e_d e_d^T$, where $I_d$ denotes the $d$-dimensional identity matrix and $e_d = (1,\ldots,1)^T$. Observations are obtained as $X^T = \Sigma_d^{1/2} Z^T$, with $Z$ having independent components following a log-normal distribution $\mathrm{LN}(0,1)$.

The second alternative represents neighbor dependence, in which each variable depends only on its immediate neighbors. In this case, the data are generated as $X^T = 0.6 (Z_{i1},\ldots,Z_{ip})^T + 0.8 (Z_{i2},\ldots,Z_{i(p+1)})^T$, where $Z_{ik} \sim \mathrm{Cauchy}(0,1)$, introducing heavy-tailed marginals and local dependence.

Finally, we consider a multivariate normal alternative with a common pairwise correlation coefficient $\rho$, given by
\[
X \sim N_d(0,\Sigma), \qquad
\Sigma_{ij} =
\begin{cases}
1, & i = j, \\
\rho, & i \neq j .
\end{cases}
\]
This setting serves as a benchmark scenario with light-tailed distributions and a well-defined covariance structure.

The simulation study reveals distinct behaviors of the test under different dependence structures, with the corresponding results presented in Tables \ref{tab:power_mr01} and \ref{tab:power_mr02}. In the neighbor dependence (ND) model, the empirical power remains largely unaffected by the dimension $d$, reflecting the fact that dependence is confined to neighboring variables and that increasing $d$ adds only limited additional information. In contrast, under the global dependence (GD) model, the power increases noticeably as $d$ grows, since dependence is present between all pairs of variables, making it easier for the test to detect the overall signal as the dimensionality increases. Under the multivariate normal alternative with pairwise correlation $\rho=0.3$, the empirical power of the test increases both with the sample sizes $n$ and the dimension $d$. 
For small $n$, the power ranges from moderate to high values depending on $d$, while for larger sample sizes it approaches or reaches $100\%$ across all dimensions. Overall, the proposed test captures dependence effectively, with its sensitivity depending on the structure of the underlying relationship.

\begin{table}[hbtp]
    \centering
    \caption{Empirical power (\%): column-wise missingness rate {$0.1$}}
    \label{tab:power_mr01}
    \begin{tabular}{cccccccc}
\hline
Model&$n \backslash d$ & 4    & 8    & 16   & 32   & 64 & 128   \\
\hline
\multirow{3}{*}{ND}&10&23.48&	23.78&	24.48&	26.38&	26.18 & 26.5\\
&20 &60.18&	69.22&	75.58&	79.12&	80.70 & 82.12\\
&40 & 96.68&	99.66&	99.96&	100& 100 & 100\\\hline
\multirow{3}{*}{ GD}&10 & 5.88&	7.02&	12.22&	26.52&	56.42 & 82.94 \\
&20 & 8.00&	12.94&	28.78&	65.02&	93.48 & 99.68\\ 
&40 & 13.28&	29.12	&65.40&	95.38&	99.98 & 100
\\\hline
\multirow{3}{*}{ $\mathcal{N}_{0.3}$}&10 &18.00& 34.68&	59.42&	81.20&	93.24 & 97.98 \\
&20 & 42.08	&72.80&	93.62&	99.38&	99.98 & 100 \\ 
&40 & 77.10&	97.74&	99.9	&100	&100 & 100 \\\hline
\multicolumn{2}{c}{total.miss.rate} & 0.35&	0.57&	0.81&	0.97&	1.00 & 1.00 \\
\hline
\end{tabular}
\end{table}

\begin{table}[htbp]
\centering
\caption{Empirical power (\%) under column-wise missingness rate 0.2.}
\label{tab:power_mr02}
\begin{tabular}{ccrrrrrr}
\hline
Model & $n \backslash d$ & 4 & 8 & 16 & 32 & 64 & 128 \\
\hline
\multirow{3}{*}{ND}
 & 10 & 13.98 & 13.14 & 13.12 & 14.18 & 13.74 & 14.34 \\
 & 20 & 41.30 & 47.04 & 51.00 & 54.26 & 55.86 & 56.94 \\
 & 40 & 86.74 & 95.24 & 98.60 & 99.32 & 99.60 & 99.66 \\
\hline
\multirow{3}{*}{GD}
 & 10 & 3.92 & 4.46 & 7.04 & 15.74 & 40.98 & 70.76 \\
 & 20 & 5.98 & 9.00 & 19.12 & 49.80 & 85.68 & 98.52 \\
 & 40 & 10.56 & 20.18 & 51.44 & 89.20 & 99.76 & 100.00 \\
\hline
\multirow{3}{*}{$\mathcal{N}_{0.3}$}
 & 10 & 10.12 & 21.54 & 41.49 & 68.24 & 87.18 & 95.60 \\
 & 20 & 29.06 & 58.18 & 86.60 & 97.80 & 99.88 & 100.00 \\
 & 40 & 64.84 & 93.58 & 99.66 & 100.00 & 100.00 & 100.00 \\
\hline
\multicolumn{2}{c}{total.miss.rate} & 0.59 & 0.83 & 0.97 & 0.99 & 1.00 & 1.00 \\
\hline
\end{tabular}
\end{table}

\subsection{Missing at random (MAR)}

Although the asymptotic results are derived under the MCAR assumption, it is of practical interest to investigate their applicability when this assumption does not hold. Following \cite{aleksic2025two}, we consider the MAR mechanism 1–9 implemented in the R package missMethods \cite{missMethods}. In all MAR scenarios, the first column is used as a fully observed reference variable, i.e., no missingness is generated in this column, while the remaining columns are subject to missingness with probability 0.1. All other simulation settings are kept the same as in the previous section. For comparison purposes we also present the results for MCAR mechanism with the same column missingness rates as in MAR. The probabilities of type I errors are presented in Tables \ref{tab:sizeMAR} and \ref{tab:sizeMAR1}.    It is evident that  the empirical sizes are close across different mechanisms and close to the nominal values even for small sample sizes.  From Table \ref{tab:power_mr01_MAR}, we observe that for a small column-wise missingness rate of 0.1, the empirical powers against the previously considered alternatives are quite similar. In contrast, when the column-wise missingness rate increases to 0.2, the empirical powers under the MAR mechanism become larger, with the impact of the missingness mechanism depending on the specific alternative and being most pronounced for the ND alternative. It is also worth noting that the largest differences in power occur for 
$n=20$, reflecting the fact that in this setting the sample size is neither sufficiently large nor sufficiently small relative to the dimension $d$.

\begin{table}[htbp]
\centering
\caption{Empirical size (\%) under MCAR and MAR with column-wise missingness rate $0.1$}
\label{tab:sizeMAR}
\begin{tabular}{ccccccccc}
\hline
Model & Missingness & $n \backslash d$ & 4 & 8 & 16 & 32 & 64 & 128 \\
\hline
\multirow{6}{*}{$\gamma(4,2)$}
 & \multirow{3}{*}{MCAR}
 & 10 & 5.28 & 4.52 & 5.34 & 4.68 & 4.74 & 4.80 \\
 &  & 20 & 5.54 & 5.56 & 4.62 & 4.92 & 4.24 & 4.36 \\
 &  & 40 & 5.74 & 4.44 & 4.86 & 4.52 & 4.72 & 5.04 \\
\cline{2-9}
 & \multirow{3}{*}{MAR}
 & 10 & 4.72 & 4.40 & 4.68 & 4.14 & 4.46 & 4.36 \\
 &  & 20 & 5.64 & 5.00 & 4.90 & 5.56 & 4.64 & 5.10 \\
 &  & 40 & 5.78 & 5.68 & 5.34 & 5.24 & 5.00 & 5.28 \\
\hline

\multirow{6}{*}{$t_4$}
 & \multirow{3}{*}{MCAR}
 & 10 & 5.48 & 5.54 &  4.46 & 4.56 & 4.44 & 4.64  \\
 &  & 20 & 6.08 & 5.20 & 4.98 & 4.42 & 4.86 & 4.64  \\
 &  & 40 & 6.26 & 5.44 & 5.18 & 5.16 & 5.04 & 4.4 \\
\cline{2-9}
 & \multirow{3}{*}{MAR}
 & 10 & 5.24 & 5.12 & 4.84 & 4.82 & 4.28 & 4.38  \\
 &  & 20 & 5.84 & 5.90 & 5.32 & 4.88 & 5.52 & 4.78 \\
 &  & 40 & 6.16 & 5.38 & 5.14 & 5.34 & 5.42 & 5.24\\
 \hline

\multirow{6}{*}{$\text{Cauchy}(0,1)$}
 & \multirow{3}{*}{MCAR}
 & 10 & 5.88 & 5.04 & 4.90 & 4.78 & 4.76 & 4.42 \\
 &  & 20 & 5.38 & 5.82 & 4.76 & 4.86 & 4.76&	4.38 \\
 &  & 40 & 6.60 & 5.78 & 5.26 & 4.82 & 4.6 & 4.28 \\
\cline{2-9}
 & \multirow{3}{*}{MAR}
 & 10 & 5.36	& 4.38 & 4.68 & 4.26 & 4.40 & 4.24 \\
 &  & 20 & 5.38 & 5.50 & 4.76 & 5.06 & 4.68 & 4.84 \\
 &  & 40 & 6.18 & 6.02 & 5.32 & 4.94 & 5.32 & 4.8\\
\hline

\multicolumn{2}{c}{total.miss.rate} & & 0.27 & 0.52 & 0.79 & 0.96  & 1.00  & 1.00 \\
\hline
\end{tabular}
\end{table}

\begin{table}[htbp]
\centering
\caption{Empirical size (\%) under MCAR and MAR with column-wise missingness rate $0.2$}
\label{tab:sizeMAR1}
\begin{tabular}{ccccccccc}
\hline
Model & Missingness & $n \backslash d$ & 4 & 8 & 16 & 32 & 64 & 128 \\
\hline
\multirow{6}{*}{$\gamma(4,2)$}
 & \multirow{3}{*}{MCAR}
 & 10 & 4.60 & 3.86 & 3.54 & 3.14 & 3.14 & 2.98    \\
 &  & 20 & 5.20 & 4.90 & 3.88 & 4.08 & 3.72 & 3.66  \\
 &  & 40 & 5.94 & 5.44 & 4.18 & 4.26 & 4.36	& 4.2 \\
\cline{2-9}
 & \multirow{3}{*}{MAR}
 & 10 & 4.02 & 3.74 & 4.28 & 3.76 & 3.80 & 4.34  \\
 &  & 20 & 5.54 & 5.58 & 5.36 & 5.82 & 5.94 & 5.64  \\
 &  & 40 & 5.80 & 5.30 & 5.34 & 4.62 & 5.62 & 5.8 \\
\hline

\multirow{6}{*}{$t_4$}
 & \multirow{3}{*}{MCAR}
 & 10 & 4.46 & 4.46 & 3.44 & 3.54 & 3.10 & 3.76  \\
 &  & 20 & 5.94 & 4.26 & 4.06 & 3.44 & 3.52 & 3.36  \\
 &  & 40 & 6.32 & 5.22 & 4.98 & 4.72 & 4.66 & 3.68\\
\cline{2-9}
 & \multirow{3}{*}{MAR}
 & 10 & 4.08 & 4.48 & 4.20 & 4.88 & 4.10 & 4.44 \\
 &  & 20 & 5.90 & 6.20 & 5.78 & 5.72 & 6.18 & 5.48 \\
 &  & 40 & 5.94 & 5.96 & 5.92 & 5.66 & 5.74 & 5.86 \\
 \hline

\multirow{6}{*}{$\text{Cauchy}(0,1)$}
 & \multirow{3}{*}{MCAR}
 & 10 &  4.78 & 3.96 & 4.18 & 3.60 & 3.62 & 3.20 \\
 &  & 20 & 4.94 & 4.48 & 4.60 & 3.72 & 3.82 & 3.28 \\
 &  & 40 & 6.10 & 5.66 & 4.62 & 4.18 & 4.06 & 4.02\\
\cline{2-9}
 & \multirow{3}{*}{MAR}
 & 10 & 4.20 & 3.98 & 3.66 & 3.86 & 3.76 & 4.10 \\
 &  & 20 & 5.60 & 5.54 & 5.64 & 5.10 & 5.70 & 5.54 \\
 &  & 40 & 6.22 & 5.92 & 5.68 & 5.12 & 5.46 & 5.5 \\
\hline
\multicolumn{2}{c}{total.miss.rate} & & 0.49 & 0.79& 0.96 & 0.99  & 1.00  & 1.00 \\
\hline
\end{tabular}
\end{table}

\begin{table}[hbtp]
    \centering
    \caption{Empirical power (\%): under MCAR and MAR with column-wise missingness rate $0.1$}
    \label{tab:power_mr01_MAR}
    \begin{tabular}{ccccccccc}
\hline
Model& Missingness &$n \backslash d$ & 4    & 8    & 16   & 32   & 64 & 128   \\
\hline
\multirow{6}{*}{ND}
&\multirow{3}{*}{MCAR}&10& 24.22&24.28&24.32&26.34&26.52&25.94\\
& &20 &61.06&69.50&75.22&77.98&81.04&80.94\\
& &40 &96.22&99.56&99.94& 100 & 100 & 100 \\
\cline{2-9}
&\multirow{3}{*}{ MAR}&10 &24.28&25.36&24.90&25.48&26.50&25.96  \\
& &20 &63.13&71.42&77.64&80.50&82.38&83.34\\ 
& &40 &97.16&99.66&100 & 100 & 100 & 100 \\
\hline
\multirow{6}{*}{GD}
&\multirow{3}{*}{MCAR}&10&6.92 & 7.44 & 12.56 & 27.52 & 55.46 & 83.48\\
& &20 &8.64 & 13.66 & 29.96 & 63.68 & 93.02 & 99.64\\
& &40 &14.46 & 29.42 & 65.88 & 95.62 & 99.9 & 100 \\
\cline{2-9}
&\multirow{3}{*}{ MAR}&10 &6.18 & 7.66 & 12.34 & 26.84 & 54.78 & 82.16\\
& &20 &9.10 & 14.48 & 31.22 & 65.20 & 93.36 & 99.54\\ 
& &40 &15.08 & 30.68 & 66.36 & 95.58 & 99.88 & 100\\
\hline
\multirow{6}{*}{$\mathcal{N}_{0.3}$}
&\multirow{3}{*}{MCAR}&10&19.28 & 35.62 & 60.58 & 81.40 & 92.94 & 98.08 \\
& &20 & 45.52 & 74.36 & 93.80 & 99.32 & 99.98 & 100 \\
& &40 & 79.70 & 97.98 & 99.98 & 100& 100 & 100 \\

\cline{2-9}
&\multirow{3}{*}{ MAR}&10 &19.14& 36.18& 59.94& 81.74& 93.08 & 97.94 \\
& &20 &45.94& 75.36& 94.16& 99.46& 99.98 & 100 \\ 
& &40 & 81.36 & 98.26 & 99.96& 100& 100 & 100\\
\hline
\multicolumn{2}{c}{total.miss.rate} & & 0.27 & 0.52 & 0.79 & 0.96  & 1.00  & 1.00 \\
\hline
\end{tabular}
\end{table}

\begin{table}[hbtp]
    \centering
    \caption{Empirical power (\%): under MCAR and MAR with column-wise missingness rate $0.2$}
    \label{tab:power_mr02_MAR}
    \begin{tabular}{ccccccccc}
\hline
Model& Missingness &$n \backslash d$ & 4    & 8    & 16   & 32   & 64 & 128   \\
\hline
\multirow{6}{*}{ND}
&\multirow{3}{*}{MCAR}&10&16.00 & 13.84 & 14.28 & 15.26 & 15.06 & 14.14\\
& &20 &44.16 & 46.28 & 50.28 & 53.36 & 53.82 & 56.72\\
& &40 &85.26 & 94.38 & 98.00 & 99.20 & 99.64 & 99.4\\
\cline{2-9}
&\multirow{3}{*}{ MAR}&10&15.88 & 16.74 & 16.98 & 18.00 & 17.14 & 17.68\\
& &20 &48.44 & 55.72 & 61.60 & 65.58 & 68.56 & 70.86\\
& &40 &90.20 & 97.20 & 99.18 & 99.80 & 99.82 & 99.86\\
\hline
\multirow{6}{*}{GD}
&\multirow{3}{*}{MCAR}&10&5.30 & 5.62 & 8.22 & 18.74 & 40.86 & 71.36\\
& &20 &7.26 & 9.60 & 20.56 & 49.04 & 86.44 & 98.64\\
& &40 &11.76 & 22.52 & 52.52 & 89.60 & 99.74 & 100\\
\cline{2-9}
&\multirow{3}{*}{ MAR}&10&4.72 & 5.96 & 9.70 & 19.62 & 42.06 & 71.52\\
& &20 &8.08 & 12.88 & 25.26 & 54.98 & 87.74 & 98.66\\
& &40 &12.70 & 25.38 & 56.32 & 90.88 & 99.64 & 100\\
\hline
\multirow{6}{*}{$\mathcal{N}_{0.3}$}
&\multirow{3}{*}{MCAR}& 10 & 14.44 & 25.04 & 45.26 & 68.96 & 87.28 & 95.5 \\
& & 20 & 36.10 & 62.72 & 87.42 & 97.86 & 99.9 & 100\\
& & 40 & 71.00 & 95.14 & 99.86& 100& 100 & 100 \\
\cline{2-9}
&\multirow{3}{*}{ MAR}&10 & 13.84 & 26.32 & 46.90 & 71.28 & 87.18 & 95.5\\
& &20 &38.14 & 65.62 & 89.14 & 98.44 & 99.9 & 100  \\ 
& &40 &72.28 & 95.68 & 99.9& 100& 100 & 100\\
\hline
\multicolumn{2}{c}{total.miss.rate} & & 0.49 & 0.79& 0.96 & 0.99  & 1.00  & 1.00 \\
\hline
\end{tabular}
\end{table}

\section{Conclusion}\label{sec:conclusion}

To test total independence in high-dimensional data in the presence of missingness, without assuming Gaussianity, we propose, to the best of our knowledge for the first time, an adaptation of the test proposed by \cite{mao2018testing}. The proposed procedure provides a feasible alternative to the complete-case approach, which is not applicable in high-dimensional settings. Under the MCAR mechanism, we derive the asymptotic null distribution of the test statistic and use it for p-value approximation. 
It turns out that the derived distribution remains applicable even when the MCAR assumption is mildly violated, as long as the columnwise missingness probabilities are small, which further increases the flexibility of the proposed test.

In light of the derived results, future research may focus on extending the proposed framework to MAR mechanisms in a formal manner, accompanied by appropriate theoretical guarantees, as well as on developing tests for the MCAR assumption in high-dimensional settings.

\section*{Acknowledgments}

The work of B. Milošević and M. Cuparić is supported by the Ministry of Science, Technological Development and Innovations of the Republic of Serbia (the contract 451-03-136/2025-03/200104) and also supported by the COST action CA21163 - Text, functional and other high-dimensional data in econometrics: New models, methods, applications (HiTEc).
The work of J. Radojevi\' c is supported by the Ministry of Science, Technological Development and Innovations of the Republic of Serbia (contract no. 200092).

\appendix

\section{Proof of Theorem \ref{thm}}\label{sec:proof}
Notice that the $U$-statistic $\hat{T}_{nd}$ can be written as a sum over all pairs of variables:
$$\hat{T}_{nd} = \sum_{1 \leq j_1 \neq j_2 \leq d} U_{j_1j_2.}$$
Let $\mathcal{F}_{n,k}=\sigma\{(X_{ij}, R_{ij}) \, |\,i=1,2...,n, j=1,2...,k \}$ denote the $\sigma$-field, and define $Y_{n,k}=\frac{2}{\sqrt{\operatorname{Var}(\widehat{T}_{nd})}}\sum_{l=1}^{k-1}U_{kl}$, where
\begin{align*}
          U_{kl} = \frac{1}{n(n-1)(n-2)(n-3)} \sum_{1\leq i_1 \neq i_2 \neq i_3\neq i_4 \leq n} f_{k}(i_1, i_2)f_{l}(i_1, i_2)f_{k}(i_3, i_4)f_{l}(i_3, i_4) 
    \end{align*}
and $f_{k}(i, j) = \sgn(X_{ik}-X_{jk})R_{ik}R_{jk}$. Note that 
\begin{align}\label{TnYn}
   \frac{\hat{T}_{nd}}{\sqrt{\operatorname{Var}(\hat{T}_{nd})}} &=  \frac{2}{\sqrt{\operatorname{Var}(\hat{T}_{nd})}}\sum_{j_2 = 2}^d \sum_{j_1 = 1}^{j_2-1}  U_{j_1j_2} = \sum_{j_2=2}^d Y_{n, j_2}. 
\end{align}
Before proceeding to the main theorem, we first establish an important property of the ${Y_{n,k}}$.

\begin{lemma}\label{lemma_martingale}
    Under $H_0$, $\{Y_{n,k},\mathcal{F}_{n,k}\,,\,2\leq k \leq d, n\geq2\}$ is a martingale difference (m.d.) array.
\end{lemma}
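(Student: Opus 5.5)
To show that $\{Y_{n,k},\mathcal{F}_{n,k}\}$ is a martingale difference array, I will check three things. First, $Y_{n,k}$ is $\mathcal{F}_{n,k}$-measurable. Second, it is integrable. Third, $E(Y_{n,k}\mid \mathcal{F}_{n,k-1})=0$ for every $2\le k\le d$.

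Measurability is immediate. Each $U_{kl}$ with $l<k$ is a function of $(X_{ij},R_{ij})$ for $j\in\{k,l\}$ only, and all such columns lie in $\mathcal{F}_{n,k}$. Integrability is also clear: $|f_k(i,j)|\le 1$, so $|U_{kl}|\le 1$ and $Y_{n,k}$ is bounded for each fixed $n,d$.

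The main step is the conditional mean. By linearity it suffices to show $E(U_{kl}\mid\mathcal{F}_{n,k-1})=0$ for each $l<k$. Under $H_0$ the columns $X_{\cdot 1},\dots,X_{\cdot d}$ are independent. Under MCAR with independent components of the response vectors, the indicator columns are independent of each other and of the $X$'s. Hence the pair $(X_{\cdot k},R_{\cdot k})=\{(X_{ik},R_{ik})\}_{i=1}^n$ is independent of $\mathcal{F}_{n,k-1}$. I therefore condition on $\mathcal{F}_{n,k-1}$ and treat the $f_l$ factors as constants:
\[
E(U_{kl}\mid\mathcal{F}_{n,k-1})=\frac{1}{n(n-1)(n-2)(n-3)}\sum_{i_1\ne i_2\ne i_3\ne i_4} f_l(i_1,i_2)f_l(i_3,i_4)\,E\bigl[f_k(i_1,i_2)f_k(i_3,i_4)\bigr].
\]
The indices $i_1,i_2,i_3,i_4$ are all distinct. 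The factor $f_k(i_1,i_2)$ depends only on rows $i_1,i_2$ of column $k$, and $f_k(i_3,i_4)$ only on rows $i_3,i_4$. Since rows are i.i.d., these two factors are independent, so the expectation factorizes into $E f_k(i_1,i_2)\,E f_k(i_3,i_4)$.

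It remains to show $E f_k(i,j)=0$ for $i\ne j$. By MCAR, $R_{ik}R_{jk}$ is independent of $\sgn(X_{ik}-X_{jk})$, so $E f_k(i,j)=q_k^2\,E\sgn(X_{ik}-X_{jk})$. Because $X_{ik},X_{jk}$ are i.i.d. and continuous, $X_{ik}-X_{jk}$ is symmetric about $0$ with no atom at $0$. Thus $E\sgn(X_{ik}-X_{jk})=0$, every summand vanishes, and $E(Y_{n,k}\mid\mathcal{F}_{n,k-1})=0$.

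The only delicate point is justifying the independence of the column-$k$ data from $\mathcal{F}_{n,k-1}$. This needs $H_0$, MCAR, and the independence of the components of $\mathbf{R}$ together: without independent indicators, $R_{\cdot k}$ could depend on earlier indicator columns, and the $q_k^2$ factorization would still hold marginally but not conditionally. The distinctness of the four indices is what makes the factorization work. This is exactly why the $U$-statistic $\hat T_{nd}$ is used instead of $\widetilde T_{nd}$: in the latter, diagonal terms with shared indices give $E f_k(i,j)^2=q_k^2\neq0$, which would destroy the martingale difference property.
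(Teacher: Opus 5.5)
Your proof is correct and follows the paper's argument: condition on $\mathcal{F}_{n,k-1}$, pull out the $f_l$ factors, factor $E[f_k(i_1,i_2)f_k(i_3,i_4)]$ using the distinctness of the four indices, and conclude from $E f_k(i,j)=0$; your explicit justification that column $k$ is independent of $\mathcal{F}_{n,k-1}$, together with the measurability and integrability checks, fills in details the paper leaves implicit. The paper also verifies $E(Y_{n,k}\mid\mathcal{F}_{m,t})=0$ for $m\le n-1$, $t\le k-1$, but this follows from your result by the tower property, since $\mathcal{F}_{m,t}\subseteq\mathcal{F}_{n,k-1}$, so nothing is missing.
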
 

\begin{proof}
First, we consider the case when $n$ is fixed. By definition, it is sufficient to show that $$E(Y_{n,k}|\mathcal{F}_{n,{k-1}}) = 0.$$ Using the expression for $Y_{n,k}$, we have
    \begin{align*}
        &E(Y_{n,k}|\mathcal{F}_{n,{k-1}}) = \frac{2}{\sqrt{\operatorname{Var}(\widehat{T}_{nd})}}\sum_{l=1}^{k-1} E(U_{kl}|\mathcal{F}_{n,{k-1}}) % \\ &=  \frac{2}{\sqrt{\operatorname{Var}(\widehat{T}_{nd})}}\sum_{l=1}^{k-1}  E\left(\frac{1}{n(n-1)(n-2)(n-3)} \sum_{1\leq i_1 \neq i_2 \neq i_3\neq i_4 \leq n} f_{k}(i_1, i_2)f_{l}(i_1, i_2)f_{k}(i_3, i_4)f_{l}(i_3, i_4 )|\mathcal{F}_{n,{k-1}}\right) \\ &= \frac{2}{\sqrt{\operatorname{Var}(\widehat{T}_{nd})}}\sum_{l=1}^{k-1}  \frac{1}{n(n-1)(n-2)(n-3)} \sum_{1\leq i_1 \neq i_2 \neq i_3\neq i_4 \leq n} E\left( f_{k}(i_1, i_2)f_{l}(i_1, i_2)f_{k}(i_3, i_4)f_{l}(i_3, i_4 )|\mathcal{F}_{n,{k-1}}\right) 
        \\ &= \frac{2}{\sqrt{\operatorname{Var}(\widehat{T}_{nd})}}\sum_{l=1}^{k-1}  \frac{1}{n(n-1)(n-2)(n-3)} \sum_{1\leq i_1 \neq i_2 \neq i_3\neq i_4 \leq n} f_{l}(i_1, i_2)f_{l}(i_3, i_4 )\\&\hspace{8cm}\cdot E\left( f_{k}(i_1, i_2)f_{k}(i_3, i_4)|\mathcal{F}_{n,{k-1}}\right) %\\ &= \frac{2}{\sqrt{\operatorname{Var}(\widehat{T}_{nd})}}\sum_{l=1}^{k-1}  \frac{1}{n(n-1)(n-2)(n-3)} \sum_{1\leq i_1 \neq i_2 \neq i_3\neq i_4 \leq n} f_{l}(i_1, i_2)f_{l}(i_3, i_4 )E\left( f_{k}(i_1, i_2)f_{k}(i_3, i_4)\right) 
        \\ &= \frac{2}{\sqrt{\operatorname{Var}(\widehat{T}_{nd})}}\sum_{l=1}^{k-1}  \frac{1}{n(n-1)(n-2)(n-3)} \sum_{1\leq i_1 \neq i_2 \neq i_3\neq i_4 \leq n} f_{l}(i_1, i_2)f_{l}(i_3, i_4 )\\&\hspace{8cm}\cdot E( f_{k}(i_1, i_2))E(f_{k}(i_3, i_4))  \\ &= 0.
    \end{align*}

Next, we show that the martingale difference property still holds when $n$ varies, i.e. that for all $m \leq n-1$ and $t \leq k-1$, $$E(Y_{nk}| \mathcal{F}_{m,t}) = 0.$$
Indeed,
\begin{align*}
    E(&U_{k,l}| \mathcal{F}_{m,t}) = \sum_{1\leq i_1 \neq i_2 \neq i_3\neq i_4 \leq n} E(f_{k}(i_1, i_2)f_{l}(i_1, i_2)f_{k}(i_3, i_4)f_{l}(i_3, i_4 )|\mathcal{F}_{m,t}) \\ &= \sum_{1\leq i_1 \neq i_2 \neq i_3\neq i_4 \leq n} E(E(f_{k}(i_1, i_2)f_{l}(i_1, i_2)f_{k}(i_3, i_4)f_{l}(i_3, i_4 )|\mathcal{F}_{n,k-1})|\mathcal{F}_{m,t})) \\ &= \sum_{1\leq i_1 \neq i_2 \neq i_3\neq i_4 \leq n} E(E(f_{k}(i_1, i_2)f_{k}(i_3, i_4)E(f_{l}(i_1, i_2)f_{l}(i_3, i_4 )|\mathcal{F}_{n,k-1})|\mathcal{F}_{m,t}))
    \\ &= 0,
\end{align*}
where the last equality holds because $E(f_k(i_1,i_2)f_k(i_3,i_4)) = 0$ under $H_0$. This complete the proof of lemma.
\end{proof}

Since \eqref{TnYn} is a sum of m.d. array or m.d. sequence, we can use related central limit theorems to prove Theorem \ref{thm}. To that end, we first state several technical results

\begin{lemma}\label{lemma_cond}
    Under $H_0$, 
\begin{itemize}
    \item[(i)] $\sum_{k=2}^dEY^4_{n,k}\to 0$, and 
    \item[(ii)] $\sum_{k=2}^dY^2_{n,k}\overset{P}{\to} 1$ as $d \to \infty$,
\end{itemize}
irrespective of whether $n$ is diverging or fixed, where $\xrightarrow{P}$ denotes convergence in probability.
\end{lemma}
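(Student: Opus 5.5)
Both parts reduce to moment computations for the column-wise kernels $f_k$ under $H_0$, MCAR and independent response indicators. The columns $\{(X_{\cdot k},R_{\cdot k})\}_k$ are then mutually independent, and within a column $E f_k(i,j)=0$. Write $V_d=\operatorname{Var}(\widehat T_{nd})$. From the displayed formula, $V_d\asymp d^2 n^{-2}$ up to constants depending on $\min_k q_k>0$ when $n\to\infty$, and $V_d\asymp d^2$ for fixed $n$. Put $U_{kl}=\sum_{\mathbf i} g_k(\mathbf i)g_l(\mathbf i)/(n)_4$ with $g_k(\mathbf i)=f_k(i_1,i_2)f_k(i_3,i_4)$ and $\mathbf i=(i_1,\dots,i_4)$ distinct. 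Conditional on column $k$, the summands $U_{kl}$, $l<k$, are independent with conditional mean zero.

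\emph{Part (i).} By conditional independence and the Marcinkiewicz--Zygmund/Rosenthal inequality given $\mathcal F$ of column $k$,
\[
E Y_{n,k}^4 \le \frac{C}{V_d^2}\Bigl(\sum_{l<k} E U_{kl}^4+\sum_{l_1\neq l_2<k}E\bigl[U_{kl_1}^2U_{kl_2}^2\bigr]\Bigr).
\]
Here $E U_{kl}^4$ is $O(n^{-4})$ when $n\to\infty$, because a degenerate kernel of degree-2 type has $U_{kl}=O_P(1/n)$ with all moments bounded; it is $O(1)$ for fixed $n$. Likewise, the cross moments $E[U_{kl_1}^2U_{kl_2}^2]=E\bigl[E(U_{kl_1}^2\mid k)^2\bigr]$ are $O(n^{-4})$, resp. $O(1)$. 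Hence $E Y_{n,k}^4\le C k^2 n^{-4}/(d^4n^{-4})$, and summing gives $\sum_k E Y_{n,k}^4=O(d^3/d^4)=O(1/d)\to0$. The same bound holds with $n$ fixed. The key point is that the dominant $k^2$ term carries a conditional expectation that I can bound uniformly in $l$.

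\emph{Part (ii).} Since $E\sum_k Y_{n,k}^2=1$ (orthogonality of martingale differences together with the definition of $V_d$), it suffices to show $\operatorname{Var}(\sum_k Y_{n,k}^2)\to0$. I expand
\[
Y_{n,k}^2=\frac4{V_d}\Bigl(\sum_{l<k}U_{kl}^2+\sum_{l_1\neq l_2<k}U_{kl_1}U_{kl_2}\Bigr)=:A_k+B_k.
\]
For the off-diagonal part, $E(U_{kl_1}U_{kl_2}U_{k'l_3}U_{k'l_4})$ vanishes unless the indices pair up. Each column appears in a $U$ through $g$, whose expectation is zero unless that column appears at least twice, so the surviving configurations number $O(d^3)$ (e.g. $k'=k$ with $\{l_3,l_4\}=\{l_1,l_2\}$, or $l$'s tied to $k$). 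Each such configuration has a moment of size $O(n^{-4})$, so $E(\sum_k B_k)^2=O(d^3n^{-4}/V_d^2)=O(1/d)$. For the diagonal part, $U_{kl}^2$ and $U_{k'l'}^2$ are independent unless they share a column, and covariances of pairs sharing exactly one column are $O(n^{-4})$. There are $O(d^3)$ such pairs, so again $\operatorname{Var}(\sum_k A_k)=O(d^3n^{-4}/V_d^2)\to0$. All of these estimates remain valid with $n$ fixed, since the moments then become $O(1)$ and $V_d\asymp d^2$. Chebyshev's inequality then gives $\sum_k Y_{n,k}^2\to1$ in probability.

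The main obstacle is the combinatorial moment bookkeeping: the fourth-order mixed moments of $U$-statistics across shared columns, and in particular verifying that covariances of $U_{kl_1}^2$ and $U_{kl_2}^2$ sharing column $k$ are genuinely of order $n^{-4}$ rather than $n^{-2}$. For this I would use the Hoeffding decomposition. Since $E(f_k(i,j)f_k(i,j')\mid X_{ik},R_{ik})\neq0$, the projections do not all vanish, so I would check explicitly that the first nonzero projection of $g_kg_l$ is second-order. That makes $U_{kl}$ have variance $O(n^{-2})$ uniformly, which is consistent with the displayed $\operatorname{Var}(\widehat T_{nd})$.
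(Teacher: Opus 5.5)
Your part (i) is fine. Conditional on column $k$, the $U_{kl}$ are independent and centred, and $E U_{kl}^4$ and $E[U_{kl_1}^2U_{kl_2}^2]$ are of order $n^{-4}$, which gives $\sum_k EY_{n,k}^4=O(1/d)$. Your order $n^{-4}$ is the correct one; the paper's $O(n^{-6})$ cannot hold, since $EU_{kl}^4\ge (EU_{kl}^2)^2\asymp n^{-4}$.

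\textbf{The gap: the cross term in (ii).} The gap is in your bound on $E(\sum_k B_k)^2$. The criterion ``each column must occur at least twice'' does not reduce the surviving configurations to $O(d^3)$. Take $k\neq k'$ and $\{l_3,l_4\}=\{l_1,l_2\}$ with $l_1,l_2<\min(k,k')$. Then all four columns are distinct and each occurs exactly twice, which gives $8\binom{d}{4}\asymp d^4$ surviving terms. Put $C_k(\mathbf{i},\mathbf{j})=E[g_k(\mathbf{i})g_k(\mathbf{j})]$, a positive semidefinite matrix indexed by the $(n)_4$ tuples. Then
\[
E\bigl(U_{kl_1}U_{kl_2}U_{k'l_1}U_{k'l_2}\bigr)=\frac{\operatorname{tr}(C_kC_{l_2}C_{k'}C_{l_1})}{((n)_4)^4},\qquad V_d=\frac{4}{((n)_4)^2}\sum_{l<k}\operatorname{tr}(C_kC_l).
\]
For example, with all $q_k$ equal these 4-cycles contribute
\[
\frac{8\binom{d}{4}}{\binom{d}{2}^2}\,\frac{\operatorname{tr}(C^4)}{(\operatorname{tr}C^2)^2}\;\to\;\frac{4}{3}\,\frac{\operatorname{tr}(C^4)}{(\operatorname{tr}C^2)^2}>0
\]
to $\operatorname{Var}(\sum_kY_{n,k}^2)$. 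With your per-term bound $O(n^{-4})$, this piece is only $O(d^4n^{-4}/V_d^2)=O(1)$. So your argument does not close even when $n\to\infty$.

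\textbf{Repair when $n\to\infty$.} For $(d,n)\to\infty$ the step can be fixed with a sharper estimate. Because $E[g_k(\mathbf{i})\mid Z_a]=0$ for every single index $a$, one has $C_k(\mathbf{i},\mathbf{j})=0$ unless $|\mathbf{i}\cap\mathbf{j}|\ge 2$. The row-sum bound then gives $\|C_k\|_{\mathrm{op}}=O(n^2)$, while $\|C_l\|_F^2=\operatorname{tr}(C_l^2)=O(n^6)$. Hence
\[
|\operatorname{tr}(C_kC_{l_2}C_{k'}C_{l_1})|\le\|C_k\|_{\mathrm{op}}\,\|C_{k'}\|_{\mathrm{op}}\,\|C_{l_1}\|_F\,\|C_{l_2}\|_F=O(n^{10}),
\]
so the 4-cycle moments are $O(n^{-6})$ and their total contribution is $O(n^{-2})$.

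\textbf{No repair for fixed $n$.} Here $\operatorname{tr}(C^4)/(\operatorname{tr}C^2)^2\ge 1/\operatorname{rank}(C)\ge 1/(n)_4$. Therefore $\sum_kY_{n,k}^2\not\to 1$ in probability, and your claim that all estimates remain valid with $n$ fixed is false. In fact, for fixed $n$ the finite-dimensional vector $(d^{-1/2}\sum_k g_k(\mathbf{i}))_{\mathbf{i}}$ is asymptotically Gaussian. Consequently $\widehat T_{nd}/d=\frac{1}{d\,(n)_4}\sum_{\mathbf{i}}\bigl[(\sum_k g_k(\mathbf{i}))^2-\sum_k g_k(\mathbf{i})^2\bigr]$ converges to a centred weighted sum of finitely many $\chi^2_1$ variables, which is not normal. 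So the fixed-$n$ halves of the lemma and of Theorem~\ref{thm}(ii) cannot be proved.

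The paper's own proof has the same hole. It asserts that for $k_1\neq k_2$ the moment $E(U_{k_1l_1}U_{k_1l_2}U_{k_2l_3}U_{k_2l_4})$ vanishes unless $l_1=l_2$ and $l_3=l_4$, which overlooks exactly these 4-cycles.
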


\begin{proof}
First, we analyze the fourth moment of $Y_{n,j_2}$. Recall that $$ EY_{n,j_2}^4 = \frac{16}{(\operatorname{Var}(\hat{T}_{n,d}))^2} \!\sum_{1\leq l_1,l_2,l_3,l_4 \leq j_2-1}\! E U_{j_2,l_1}U_{j_2,l_2}U_{j_2,l_3}U_{j_2,l_4}.$$
To evaluate this expression, we first consider the expectation of a term in the sum. Let $A = \sigma(X_{i{j_2}}, R_{i{j_2}}, 1\leq i \leq n)$. For mutually distinct $l_1, l_2, l_3$ and $l_4$, we have
   \begin{align}\label{ineq l1l2l3l4}
        &E(U_{j_2,l_1}U_{j_2,l_2}U_{j_2,l_3}U_{j_2,l_4})\!=\! E(E(U_{j_2,l_1}U_{j_2,l_2}U_{j_2,l_3}U_{j_2,l_4}|A))\nonumber \\&=\! \frac{1}{n^4(n\!-\!1)^4(n\!-\!2)^4(n\!-\!3)^4} \nonumber\\&\times E\sum_{\substack{1\leq i_1 \neq i_2 \neq i_3 \neq i_4 \leq n \\ 1\leq i_1^{'} \neq i_2^{'} \neq i_3^{'} \neq i_4^{'} \leq n \\ 1\leq i_1^{''} \neq i_2^{''} \neq i_3^{''} \neq i_4^{''} \leq n \\ 1\leq i_1^{'''} \neq i_2^{'''} \neq i_3^{'''} \neq i_4^{'''} \leq n }} 
        \begin{aligned}[t]
        &f_{j_2}(i_1, i_2)f_{j_2}(i_3, i_4)f_{j_2}(i_1^{'}, i_2^{'})f_{j_2}(i_3^{'}, i_4^{'})\\ &f_{j_2}(i_1^{''}, i_2^{''})f_{j_2}(i_3^{''}, i_4^{''}) f_{j_2}(i_1^{'''}, i_2^{'''})f_{j_2}(i_3^{'''}, i_4^{'''}) \\ &E[ f_{l_1}(i_1, i_2)f_{l_2}(i_3, i_4)|A] \cdot\dots\cdot E[ f_{l_4}(i_1^{'''}, i_2^{'''})f_{l_4}(i_3^{'''}, i_4^{'''})|A].
        \end{aligned}
   \end{align}
Since $E[ f_{l_1}(i_1, i_2)f_{l_2}(i_3, i_4)|A] = Ef_{l_1}(i_1, i_2)f_{l_2}(i_3, i_4) = Ef_{l_1}(i_1, i_2)Ef_{l_2}(i_3, i_4) = 0$, it follows that \eqref{ineq l1l2l3l4} is equal to $0$. In the same way, it is easy to see that $E(U_{j_2,l_1}U_{j_2,l_2}U_{j_2,l_3}U_{j_2,l_4}) = 0$ whenever exactly two or three indices $l_1, l_2, l_3$ and $l_4$ are equal.
Now consider the case $l_1=l_2\neq l_3=l_4$,
\begin{align*}
    &E(U_{j_2,l_1}^2 U_{j_2,l_3}^2) = E(E(U_{j_2,l_1}^2 U_{j_2,l_3}^2)|A) = \frac{1}{n^4(n-1)^4(n-2)^4(n-3)^4} \nonumber\\&\times \sum_{\substack{1\leq i_1 \neq i_2 \neq i_3 \neq i_4 \leq n \\ 1\leq i_1^{'} \neq i_2^{'} \neq i_3^{'} \neq i_4^{'} \leq n \\ 1\leq i_1^{''} \neq i_2^{''} \neq i_3^{''} \neq i_4^{''} \leq n \\ 1\leq i_1^{'''} \neq i_2^{'''} \neq i_3^{'''} \neq i_4^{'''} \leq n }} 
        \begin{aligned}[t]
        E[&f_{j_2}(i_1, i_2)f_{j_2}(i_3, i_4)f_{j_2}(i_1^{'}, i_2^{'})f_{j_2}(i_3^{'}, i_4^{'})\\ &f_{j_2}(i_1^{''}, i_2^{''})f_{j_2}(i_3^{''}, i_4^{''}) f_{j_2}(i_1^{'''}, i_2^{'''})f_{j_2}(i_3^{'''}, i_4^{'''}) \\ &E(f_{l_1}(i_1, i_2)f_{l_1}(i_3, i_4)f_{l_1}(i_1^{'}, i_2^{'})f_{l_1}(i_3^{'}, i_4^{'})|A) \\ &E(f_{l_2}(i_1^{''}, i_2^{''})f_{l_2}(i_3^{''}, i_4^{''})f_{l_2}(i_1^{'''}, i_2^{'''})f_{l_2}(i_3^{'''}, i_4^{'''})|A)].
        \end{aligned} \\ &= \frac{1}{n^4(n-1)^4(n-2)^4(n-3)^4}\nonumber \sum_{\substack{1\leq i_1 \neq i_2 \neq i_3 \neq i_4 \leq n \\ 1\leq i_1^{'} \neq i_2^{'} \neq i_3^{'} \neq i_4^{'} \leq n \\ 1\leq i_1^{''} \neq i_2^{''} \neq i_3^{''} \neq i_4^{''} \leq n \\ 1\leq i_1^{'''} \neq i_2^{'''} \neq i_3^{'''} \neq i_4^{'''} \leq n }} 
        \begin{aligned}[t]
        &E(f_{j_2}(i_1, i_2)\cdots f_{j_2}(i_3^{'''}, i_4^{'''}) \\ &E(f_{l_1}(i_1, i_2)\cdots f_{l_1}(i_3^{'}, i_4^{'})) \\&E(f_{l_2}(i_1^{''}, i_2^{''})\cdots f_{l_2}(i_3^{'''}, i_4^{'''})).
        \end{aligned}
\end{align*}
The last expression can be bounded by noting that only a limited number of index combinations give a nonzero contribution, so the overall term is of order $O\big(\frac{1}{n^6}\big)$.

For the case $l_1 = l_2 = l_3=l_4$,
\begin{align*}
    E(&U_{j_2,l_1}^4) = \frac{1}{n^4(n-1)^4(n-2)^4(n-3)^4} \nonumber\\&\times \sum_{\substack{1\leq i_1 \neq i_2 \neq i_3 \neq i_4 \leq n \\ 1\leq i_1^{'} \neq i_2^{'} \neq i_3^{'} \neq i_4^{'} \leq n \\ 1\leq i_1^{''} \neq i_2^{''} \neq i_3^{''} \neq i_4^{''} \leq n \\ 1\leq i_1^{'''} \neq i_2^{'''} \neq i_3^{'''} \neq i_4^{'''} \leq n }} 
        \begin{aligned}[t]
        &E[f_{j_2}(i_1, i_2)f_{j_2}(i_3, i_4)f_{j_2}(i_1^{'}, i_2^{'})f_{j_2}(i_3^{'}, i_4^{'}) \\ &f_{j_2}(i_1^{''}, i_2^{''})f_{j_2}(i_3^{''}, i_4^{''}) f_{j_2}(i_1^{'''}, i_2^{'''})f_{j_2}(i_3^{'''}, i_4^{'''})] \\ &E[f_{l_1}(i_1, i_2)f_{l_1}(i_3, i_4)f_{l_1}(i_1^{'}, i_2^{'})f_{l_1}(i_3^{'}, i_4^{'}) \\ &f_{l_1}(i_1^{''}, i_2^{''})f_{l_1}(i_3^{''}, i_4^{''})f_{l_1}(i_1^{'''}, i_2^{'''})f_{l_1}(i_3^{'''}, i_4^{'''})].
        \end{aligned}
\end{align*}
Similarly, this last expression can also be bounded by $O\big(\frac{1}{n^6}\big)$, since only a limited number of index combinations give a nonzero contribution. As a consequence
\begin{align*}
     \sum_{j_2=2}^{d}&EY_{n,j_2}^4 = \frac{16}{(\operatorname{Var}(\hat{T}_{n,d}))^2} \sum_{j_2=2}^{d}\sum_{1\leq l_1,l_2,l_3,l_4 \leq j_2-1} E(U_{j_2,l_1}U_{j_2,l_2}U_{j_2,l_3}U_{j_2,l_4}) \\ &\!=\! \frac{16\cdot 3}{(\operatorname{Var}(\hat{T}_{n,d}))^2} \sum_{j_2=2}^{d}\sum_{l_1 = 1}^{j_2-1}\sum_{l_3 = 1}^{j_2-1}E(U_{j_2,l_1}^2U_{j_2,l_3}^2) \!+\! 16\operatorname{Var}(\hat{T}_{n,d}))^{-2} \sum_{j_2=2}^{d}\sum_{l_1 = 1}^{j_2-1}EU_{j_2,l_1}^4 \\ &=\!  \frac{16\cdot 3}{(\operatorname{Var}(\hat{T}_{n,d}))^2}O\left(\frac{1}{n^6}\right)\! \sum_{j_2=2}^{d} \!(j_2\!-\!1)(j_2\!-\!2) \!+\! \frac{16}{(\operatorname{Var}(\hat{T}_{n,d}))^2} O\left(\frac{1}{n^6}\right)\!\sum_{j_2=2}^{d} (j_2\!-\!1) \\ &= \frac{16}{(\operatorname{Var}(\hat{T}_{n,d}))^2}O\left(\frac{1}{n^6}\right) d(d-1)(d-2) + \frac{16}{(\operatorname{Var}(\hat{T}_{n,d}))^2} O\left(\frac{1}{n^6}\right)\frac{d(d-1)}{2} \\ &\approx \frac{16n^4}{d^2(d-1)^2}\frac{1}{n^6}d(d-1)(d-2)+\frac{16n^4}{d^2(d-1)^2}\frac{1}{n^6}\frac{d(d-1)}{2} \\ &=  \frac{16(d-2)}{d(d-1)n^2}+\frac{8}{d(d-1)n^2}.
\end{align*}
Therefore, $ \sum_{j_2=2}^{d}EY_{n,j_2}^4 \to 0$ as $d \to \infty$, regardless of $n$.

To show that $\sum_{k=2}^{d}Y_{n,k}^2 \xrightarrow{P} 1$, it suffices to prove that $E\left(\sum_{k=2}^{d}Y_{n,k}^2 - 1\right)^2 \to 0$ as $p \to \infty$.
Note that \begin{align*}
    E\left(\sum_{k=2}^{d}Y_{n,k}^2 - 1\right)^2 % &= E\Big(\sum_{k_1=2}^d Y_{n,k_1}^2 \sum_{k_2=2}^d Y_{n,k_2}^2\Big)-2\sum_{k=2}^d EY_{n,k}^2 + 1 \\ &
    = \sum_{k_1=2}^d \sum_{k_2=2}^d E(Y_{n,k_1}^2Y_{n,k_2}^2) - 2\sum_{k=2}^d EY_{n,k}^2 + 1.
\end{align*}
Further, we have
\begin{align*}
    \sum_{k=2}^d EY_{n,k}^2 &\!=\! \frac{4}{\operatorname{Var}(\hat{T}_{n,d})} \sum_{k=2}^d E\left(\sum_{l=1}^{k-1}U_{k,l}\right)^2 \!=\! \frac{4}{\operatorname{Var}(\hat{T}_{n,d})} \sum_{k=2}^d\sum_{l_1=1}^{k-1}\sum_{l_2=1}^{k-1}E(U_{k,l_1}U_{k,l_2}) \\ &= \frac{4}{\operatorname{Var}(\hat{T}_{n,d})} \sum_{k=2}^d\sum_{\substack{l_1, l_2=1 \\ l_1 \neq l_2} }^{k-1}E(U_{k,l_1}U_{k,l_2}) + \frac{4}{\operatorname{Var}(\hat{T}_{n,d})} \sum_{k=2}^d\sum_{l=1}^{k-1}E(U_{k,l}^2) \\ &= 0 + \frac{4}{\operatorname{Var}(\hat{T}_{n,d})} \sum_{k=2}^d\sum_{l=1}^{k-1}\operatorname{Var}(U_{k,l}) = 1. %\\ &= 4\cdot\frac{1}{4} = 1.
\end{align*}
Hence, we have
\begin{align*}
    E\left(\sum_{k=2}^{d} Y_{n,k}^2 - 1\right)^2 = \sum_{k=2}^{d} E(Y_{n,k}^4) + \sum_{k_1=2}^{d}\sum_{k_2\neq k_1}^{d} E(Y_{n,k_1}^2 Y_{n,k_2}^2) - 1.
\end{align*}
Besides, when $ k_1 \ne k_2 $, it is easy to verify that
$E(U_{k_1 l_1} U_{k_1 l_2} U_{k_2 l_3} U_{k_2 l_4}) \ne 0$ only if $ l_1 = l_2$ and $ l_3 = l_4 $. Therefore,

\begin{align*}
    E(Y_{n,k_1}^2Y_{n,k_2}^2) %&= \operatorname{Var}(\hat{T}_{n,d})^{-2} \sum_{l_1=1}^{k_1 - 1} \sum_{l_2=1}^{k_1 - 1} \sum_{l_3=1}^{k_2 - 1} \sum_{l_4=1}^{k_2 - 1} E(U_{k_1 l_1} U_{k_1 l_2} U_{k_2 l_3} U_{k_2 l_4}) \\ 
    &= \operatorname{Var}(\hat{T}_{n,d})^{-2} \sum_{l_1=1}^{k_1 - 1} \sum_{l_2=1}^{k_2 - 1} E(U_{k_1 l_1}^2U_{k_2 l_2}^2)
    \\ &= \operatorname{Var}(\hat{T}_{n,d})^{-2} \sum_{l_1=1}^{k_1 - 1} \sum_{l_2=1}^{k_2 - 1} E(U_{k_1 l_1}^2) E(U_{k_2 l_2}^2). 
\end{align*} 
Hence,

\begin{align*}
    \sum_{k_1=2}^d \sum_{\substack{k_2=2 \\ k_2\neq k_1}}^d EY_{n,k_1}^2Y_{n,k_2}^2 &\!=\! 16\operatorname{Var}(\hat{T}_{n,d})^{-2} \sum_{k_1=2}^d \sum_{l_1=1}^{k_1 - 1} E(U_{k_1,{l_1}}^2) \sum_{k_2=2}^d\sum_{l_2=1}^{k_2 - 1}  E(U_{k_2,l_2}^2) \!=\! 1.  %16\cdot \frac{1}{16} = 1.
\end{align*}

Therefore, by the result in (i), we have
\begin{align*}
    E\left( \sum_{k=2}^{d} Y_{n,k}^2 - 1 \right)^2 \to 0 \quad \text{as } d \to \infty,
\end{align*}
which holds regardless of whether $n$ is fixed or diverging.
\end{proof}

Now we have all ingredients to prove Theorem \ref{thm}.

\begin{proof}[Proof of Theorem \ref{thm}]
According to Lemma \ref{lemma_martingale}, $\hat{T}_{n,d}$ can be expressed as a sum of a m.d. array or sequence. Therefore, the Central Limit Theorem for m.d. arrays stated in Theorem 2.3 of \cite{mcleish1974dependent} is applicable. 
In fact, condition (i) in Lemma \ref{lemma_cond} corresponds to the Lyapunov condition, which ensures that assumptions (a) and (b) in McLeish’s theorem from \cite{mcleish1974dependent} hold. Moreover, condition (ii) in Lemma \ref{lemma_cond} is equivalent to assumption (c) in McLeish’s theorem. As a result, $\frac{\hat{T}_{n,d}}{\sqrt{\operatorname{Var}(\hat{T}_{n,d})}} \to N(0,1)$ as $d\to \infty$.
\end{proof}

%\bibliographystyle{abbrv}
%\bibliography{references}

\end{document}